\newcommand{\R}{\mathbb{R}}
\newcommand{\Chi}{\mathfrak{X}}
\newcommand{\comment}[1]{}
\renewcommand{\section}{\@startsection%
{section}
{1}
{0mm}
{1.5\bigskipamount}
{0.5\bigskipamount}
{\centering\normalsize\sc}}
\renewcommand{\paragraph}{\@startsection%
{paragraph}
{4}
{0mm}
{\bigskipamount}
{-1.25ex}
{\normalsize\sl}}
\def\provedboxcontents#1{$\square$}
\newtheoremstyle{thm}{}{}{\slshape}{}{\scshape}{.}{0.5em}{}
\newtheoremstyle{def}{}{}{}{}{\scshape}{.}{0.5em}{}
\newtheoremstyle{rmk}{}{}{}{}{\scshape}{.}{0.5em}{}
\newtheoremstyle{claim}{}{}{}{}{\slshape}{.}{0.5em}{}
\newtheorem{newstatement}{newstatement}
\newtheorem{theorem}[newstatement]{Theorem}
\newtheorem{corollary}[newstatement]{Corollary}
\newtheorem*{conjecture*}{Conjecture}
\newtheorem*{prop*}{Proposition}
\newtheorem{definition}{Definition}
\theoremstyle{def}
\theoremstyle{rmk}
\theoremstyle{claim}
\let\expandafter\oldproof\csname\string\proof\endcsname
\let\oldendproof\endproof
\renewenvironment{proof}[1][\proofname]{%
  \oldproof[\slshape #1]%
}{\oldendproof}
\let\leq\leqslant
\let\phi\varphi
\let\epsilon\varepsilon
\renewcommand{\emph}[1]{{\slshape #1}}
\renewcommand{\em}{\sl}
\title{Equilibrium selection: a geometric approach}
\author{Andrea Loi}
\address{Andrea Loi, Dipartimento di Matematica e Informatica \\
         Universit\`a di Cagliari, Italy.}
         \email{loi@unica.it}
\author{Stefano Matta}
\address{Stefano Matta, Dipartimento di Scienze economiche e Aziendali \\
         Universit\`a di Cagliari, Italy.}
         \email{smatta@unica.it}
\author{Daria Uccheddu}
\address{Daria Uccheddu, Dipartimento di Matematica e Informatica\\
         Universit\`a di Cagliari, Italy.}
         \email{daria.uccheddu@unica.it}
\date{\today}
\thanks{The first author  was supported  by Prin 2015 -- Real and Complex Manifolds; Geometry, Topology and Harmonic Analysis -- Italy, by INdAM. GNSAGA - Gruppo Nazionale per le Strutture Algebriche, Geometriche e le loro Applicazioni.}
\begin{document}
\begin{abstract}
In this paper we propose a geometric approach to the selection of the equilibrium price.
After a perturbation of the parameters, the new price is selected thorough the composition of two maps: the projection on the linearization of the equilibrium manifold, a method that underlies econometric modeling, and the exponential map, that associates a tangent vector with a geodesic on the manifold. 
As a corollary of our main result, we prove the equivalence between zero curvature and uniqueness of equilibrium in the case of an arbitrary number of goods and two consumers, thus extending the previous result by \cite{lmcurvuni}.

{\it{Keywords}}: Equilibrium manifold, equilibrium selection, uniqueness of equilibrium, curvature.

{\it{Subj.Class}}: {C61, C65, D50, D51.}

\end{abstract}
\maketitle

\vspace{0.3in}

\section{Introduction}\label{introduction}

In a pure exchange economy
with two consumers, an arbitrary number $l$ of goods and
fixed total resources $r$ distributed across agents,
suppose that at the initial endowment allocation
the set of corresponding equilibrium prices is not singleton, that is,
there are multiple equilibria.
As endowments vary, price adjusts towards a new equilibrium.
One could explore the out-of-equilibrium dynamics of this adjustment
or, less ambitiously, focus on a continuous approximation
represented by a sequence of equilibrium changes.
Yet this simpler approach, if there is price multiplicity, raises the crucial issue of price selection, that is, of which price will occur after a change in parameters.

In this setting the equilibrium price is often assumed to vary continuously (smoothly) as a result of a continuous (smooth) variation of parameters and
discontinuities in prices are usually attributed to singularities, a property called {\em smooth selection}, despite the lack of a theory behind it and the presence of other potential prices that could occur.

In the present paper we tackle this issue following the equilibrium manifold approach which, from the seminal work of Debreu, found an elegant geometric formulation in Balasko. We refer the reader to \cite{balib}.
We recall that the equilibrium manifold is defined as
the pairs of prices and endowments such that aggregate excess demand is zero.

The following figure is a stylized representation of the equilibrium manifold $E(r)$
and the smooth selection.
It depicts a pure exchange economy with multiple prices, where $\Omega(r)$
and $S$ denote the space of endowments and normalized prices, respectively.

\begin{figure}[h]
\begin{tikzpicture}
  
\draw[<-> ,thick] (7,0) node[below]{$\Omega(r)$} -- (0,0) --
    (0,6) node[left]{$S$};
    
\draw[rounded corners] (0.5,5) to [out=0,in=160] (5,4.5) 
 to [out=320,in=35] (1.5,2)
		    to [out=-100, in=170] (6.5,0.4);
\draw[draw=red, thick] (3,4.95) to [out=0,in=170] (4,4.8);		    

\node at (3,-0.26) {$\omega$};
\node at (4,-0.2) {$\omega'$};
  
\draw[draw=blue, thick] (3,0) to  (4,0);

\draw[fill] (3,0) circle [radius=0.05];
\draw[fill] (4,0) circle [radius=0.05];
\draw[fill] (3,4.95) circle [radius=0.05] node[above]{$x$};
\draw[fill] (4,4.8) circle [radius=0.05]  node[above]{$x'$};
\draw[fill] (4,3.1) circle [radius=0.05]  node[above right]{$y$};
\draw[fill] (0,3.1) circle [radius=0.05]  node[left]{$\tilde p$};

\draw[fill] (0,3.1) circle [radius=0.05]  node[left]{$\tilde p$};

\draw[dash dot] (5.35,4) -- (5.35,0);
\node at (5.35,-0.26) {$\omega_c$};

\draw [dash dot] (3,0)-- (3,4.95);
\draw[dash dot] (4,0) --(4,4.8);
\draw[dash dot] (0,3.1) --(4,3.1);

\end{tikzpicture}
\caption{A redistribution of endowments with smooth selection.}
\label{figsel}
\end{figure}
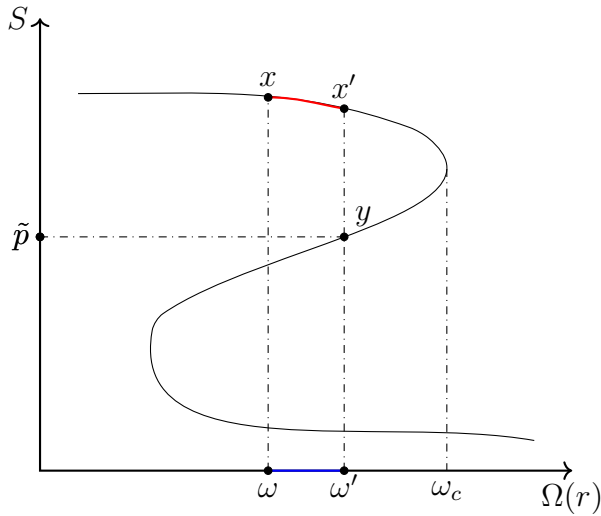

Starting from
$x=(p,\omega)$, we assume that the endowments are continuously
changed to $\omega'$. We observe that $\omega$ and $\omega'$ belong to a
region of the parameters with
three equilibria. The red curve represents the equilibrium path, i.e., prices and endowments consistent with the equilibrium during the transition of the economy from $\omega$ to $\omega'$.   Such a path is an approximation of the dynamics of price adjustment that ignores the out-of-equilibrium process. In fact, this curve is a subset of the equilibrium manifold.
The blue curve, an approximation of a discrete sequence of changes of the endowments, is the projection of the red curve onto the space of endowments and it is the cause of the price adjustment.

Following the smooth selection, it is generally accepted that this continuous (smooth) variation of endowments entails a continuous (smooth) change in the supporting equilibrium price vector.
In other terms, discontinuities of prices are only attributed to catastrophes, that is, when the endowments change crosses a singular economy, as ,e.g.,
$\omega_c$. Otherwise, prices \lq\lq should not jump'' to different \lq\lq records'',
as would occur in $y=(\tilde p,\omega')$.

This view of the smooth selection can be seen mathematically equivalent to finding a smooth function
$f:S\times\Omega(r)\to E(r)$, such that $f(E(r))=E(r)$, or to project homotopically (smoothly) $S\times \Omega(r)$ onto $E(r)$ through a deformation retract.
If we denote by $\pi:S\times \Omega(r)\to\Omega(r)$ the {\em natural projection},
the above construction would imply that, given $(p,\omega)\in E(r)$,  a neighborhood 
$U_{\omega}$ containing $\omega=\pi(p,\omega)$, and  chosen a point
$(p',\omega')$ sufficiently close to $E(r)$, we would have $f(p',\omega')\in\pi^{-1}(U_\omega)$.

The knowledgeable reader in the equilibrium manifold
could think of an alternative, natural way, to tackle this problem using the unknottedness property of $E(r)$ \cite{dmg}, that is, by using a diffeomorphism 
$D:S\times \Omega(r)\to S\times \Omega(r)$ such that $D(E(r))=\{p_0\}\times\Omega(r)$,
where the equilibrium manifold is mapped into an hyperplane. By denoting the projection
$proj:S\times \Omega(r)\to \{p_0\}\times \Omega(r)$, $(p,\omega)\mapsto (p_0,\omega)$,
one could define $f:=D^{-1}\circ proj\circ D:S\times\Omega(r)\to S\times\Omega(r)$, but
unfortunately, this topological approach would not offer any insight on how to explicitly find the function $f$.

The purpose of the present paper instead is to provide a geometric construction to find the function $f$ to explain the smooth selection phenomenon.
To provide an insight, let us denote by  $\pi_T$ the projection $\pi_T:S\times\Omega(r)\to TE(r)$, where $TE(r)$
denotes the tangent bundle of $E(r)$, that is, $\{(p,v)| p\in E(r), v\in T_p(E(r))\}$.

Denote the initial equilibrium 
by $x=(p,\omega)$ and change the endowments to 
$\omega'$.  After this change takes place, the pair
$z=(p,\omega')$ is out of the equilibrium manifold.
Our approach consists of  (1) locally approximating the manifold with its tangent plane at $x$, $T_xE(r)$, (2) projecting $z$ onto $T_xE(r)$, $\pi_T(z)$, and then (3) mapping the vector $v=\pi_T(z)-x$ into the (unique) geodesic connecting 
$x$ to $y$ via the exponential map\footnote{Note that there exists a unique geodesic $\gamma:(-a,a)\to E(r)$, such that $\gamma(0)=x$ and $\gamma'(0)=v$,
for small $a>0$ and $\|v\|$.  Moreover, because by decreasing its velocity, the interval  of definition $(-a,a)$ of the geodesic can be uniformly increased, the exponential map
can be defined as $exp_x(v):=\gamma(1)$. This map can be shown to be a (local) diffeomorphism near $x$. But, in our geometric construction, being $\|v\|$ arbitrary,  $exp_x$ must be defined for all $T_xE(r)$, that is, it must be a global diffeomorphism.} 
$exp_x: TE(r)\to E(r)$,
where $y=(p',\omega')$ represents the new equilibrium
that is consistent with the adjustment process.
This way, the supporting equilibrium price is unambiguously selected and belongs to the same record. This construction enables us to explicitly define the map
 $f$, that is,
 $$f:=exp\circ\pi_T:S\times\Omega(r)\to E(r).$$

The following figure illustrates our approach.

\begin{figure}[h]
\begin{tikzpicture}
  
\draw[<-> ,thick] (7,0) node[below]{$\Omega(r)$} -- (0,0) --
    (0,6) node[left]{$S$};
\draw[thick] (2.5,4.95) to [out=-5,in=120] (6,3);
\draw[fill] (0,4.7) circle [radius=0.05]  node[left]{$p$};   
\draw[fill] (0,3.7) circle [radius=0.05]  node[left]{$p'$};    
\draw[fill] (3.9,4.7) circle [radius=0.05]  node[above]{$x$};
\draw[fill] (5.5,4.7) circle [radius=0.05]  node[above]{$z$};

\draw[dash dot] (0,4.7) --(3.9,4.7);
\draw[dash dot](0,3.7)--(5.5,3.7);


\draw[fill] (5.5,3.7) circle [radius=0.05]  node[right]{$y$};

\draw (3.9,4.7) -- (5.9,4);

\draw[fill] (3.9,0) circle [radius=0.05];
\draw[fill] (5.5,0) circle [radius=0.05];
\draw[dash dot] (5.5,0) --(5.5,3.7);


\draw [-{Stealth[scale=1.5]}] (3.9,4.7) -- (5.35,4.2);
\draw[dash dot] (3.9,0)--(3.9,4.7);

\draw[dash dot] (5.35,4.2)--(5.5,4.7);

\node at (5.1,4.5) {$v$};

\node at (3.9,-0.26) {$\omega$};
\node at (5.5,-0.2) {$\omega'$};

\end{tikzpicture}
\caption{A geometric approach to the equilibrium selection.}
\label{figeo}
\end{figure}
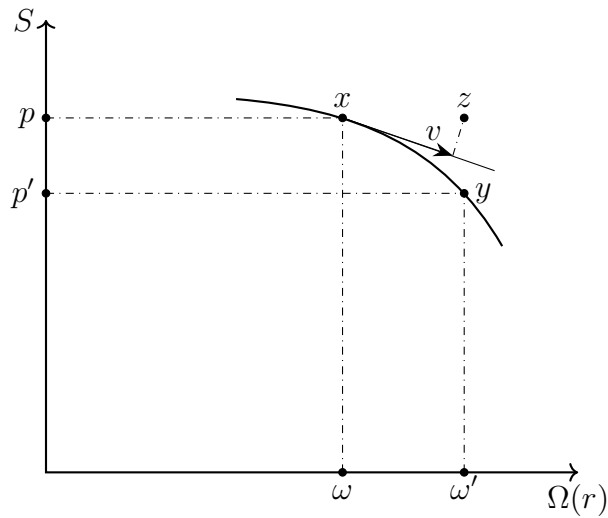
This geometric construction takes into account what occurs outside the manifold
and reflects the principle of optimizing distance,
that can be thought to embody costs of processing information necessary to coordinate agents, thus enabling the system to converge towards the new equilibrium.
Despite its seemingly geometric flavor, it deeply relies on the economic properties
that affect the geometry of the equilibrium manifold. 
Moreover, this optimizing construction is based on the composition of two natural maps, the projection on the linearization of the equilibrium manifold, a method that underlies regression techniques, and the exponential map that associates a tangent vector with a geodesic on the manifold. Furthermore, our approach differs from the contributions in the literature, which focused on introducing uncertainty through randomization over the equilibra. We refer the reader to \cite{pol} and the references therein.

The key property needed to implement our construction is the non-positive curvature $K$, with the induced metric, of the equilibrium manifold. The proof of this property plays a crucial role because it implies, by Hadamard's theorem (see Theorem \ref{had} for details), that the exponential map is a global diffeomorphism. A substantial part of the present paper is devoted to its proof.

There is an interesting connection between curvature of the equilibrium manifold and uniqueness in the literature. Balasko \cite[Theorem 7.3.9]{balib} showed that if there is uniqueness of equilibrium for every endowment profile of the commodity space, then the curvature of the equilibrium manifold is zero. It has been shown \cite{lmcurvuni} that, in the case of two commodities, if the curvature is zero then there is uniqueness of equilibrium.  
Furthermore, following an information-theoretic approach,
\cite{lment} established a connection between  entropy minimization and uniqueness when
the equilibrium manifold is a minimal stable submanifold of its ambient space, a property that can be expressed through a minimality condition in terms of the vanishing of the mean curvature.
Moreover, \cite{lmcurvuni} conjectured that the equivalence between zero curvature property and uniqueness holds for an arbitrary number of goods. As a by-product of our main result, in the present paper we prove (see Corollary \ref{coruni})
this equivalence in the case of an arbitrary number of goods and two consumers, thus extending the previous result in the direction of the conjecture.

The rest of this paper is organized as follows. In Section \ref{setting}
we recall the properties of the equilibrium manifold relevant for our purposes.
In Section \ref{tools} we introduce the main concepts of differential geometry
used in this paper. In Section \ref{main} we prove (see Theorem \ref{teorcurv}) that the equilibrium manifold has non-positive sectional curvature, a property that legitimates our geometric approach to the equilibrium selection.
Moreover, Corollary \ref{coruni} establishes the equivalence between zero-curvature and uniqueness of equilibrium. Finally, a mathematical appendix contains all the tedious computations.

\section{The economic setting}\label{setting}

We consider a smooth pure exchange economy with $L$ goods and $M$ consumers. 
The equilibrium manifold $E$ is defined as the set
of pairs $(p,\omega)$ such that the excess demand function is zero, where
$p$ belongs to the set of normalized prices $S=\{p=(p_1,\ldots,p_L)\in \R^L| p_l>0,l=1,\ldots,L,p_L=1\}\cong \mathbb{R}^{L-1}$ and $\omega$ belongs to the space of endowments $\Omega=\R^{LM}$.
The equilibrium manifold enjoys very nice geometric properties, being a smooth submanifold of $S\times\Omega$ globally diffeomorphic to $\R^{LM}$ \citep[Lemma 3.2.1]{balib}. 

If total resources $r\in\R^L$ are fixed,  the equilibrium manifold, denoted by $E(r)$, is a submanifold of $S\times\Omega(r)$ globally diffeomorphic to $\R^{L(M-1)}$ \citep[Corollary 5.2.5]{balib}, that is, $E(r)\cong B(r)\times \R^{(L-1)(M-1)}$, where $B(r)$ denotes the {\em price-income equilibria}, a submanifold of $S\times \R^M$ diffeomorphic to $\R^{M-1}$ \citep[Corollary 5.2.4]{balib}.
If we define this diffeomorphism as

\begin{equation}\label{parB}
\phi:\R^{M-1}\to B(r),
\end{equation}

$$t=(t_1,\ldots,t_{M-1})\mapsto  (p(t),w_1(t),\ldots,w_{M-1}(t)),$$

where $w_i$ denotes consumer $i$'s income, a parametrization of $E(r)$ 
(see formulas (6), (7) and (10) in \cite{lmcurvuni}) is given by 

\begin{equation}\label{parE}
\Phi:\R^{L(M-1)}\to E(r),
\end{equation}

$$(t,\bar\omega_1,\ldots,\bar\omega_{M-1})\mapsto (p(t),\bar\omega_1,w_1(t)-p(t)\bar\omega_1,\ldots
\bar\omega_{M-1},w_{M-1}(t)-p(t)\bar\omega_{M-1}),$$
where $\bar\omega_i$ denotes the first $L-1$ components of $\omega_i$,
consumer $i$'s endowments vector.

\section{Geometric tools}\label{tools}

We refer the reader to \cite{do} for a deeper understanding of the concepts of
differential geometry used in this paper. In this section we introduce the main tools.

Let $M$ be  a submanifold of dimension $m$ in its ambient space  $(\mathbb R^n,g_{euclid})$. This induces on $M$ a metric in a natural way. In particular, if   
\[
\begin{split}
\phi :  &\: \mathbb{R}^m\longrightarrow M\subset \mathbb R^n\\
&(x_1,\dots,x_m)\mapsto (\phi_1,\dots\phi_n)
\end{split}\] 
is a parametrization of $M$, the vector fields  
$X_1=(\frac{\partial \phi_1}{\partial x_1},\dots,\frac{\partial \phi_1}{\partial x_m})$, $X_2=(\frac{\partial \phi_2}{\partial x_1},\dots,\frac{\partial \phi_2}{\partial x_m})$ and  $X_n=(\frac{\partial \phi_n}{\partial x_1},\dots,\frac{\partial \phi_n}{\partial x_m})$
form a basis
$\{X_1,\dots, X_n\}$  of vector fields of $T_pM$ for $p\in M$. The induced metric is given by $$ds^2=\displaystyle\sum_{i,j=1}^{m}g_{ij}dx_idx_j$$ where $g_{ij}=\langle X_i,X_j\rangle_{g_{euclid}}$. \\
Let us denote by $\Chi (M)$ the set of all vector fields of class $C^{\infty}$ on $M$. Then there exists an affine connection
 \[\begin{split}
 \nabla: \Chi(M)\times& \Chi(M)\rightarrow \Chi(M)\\
 (X,&Y)\mapsto \nabla_XY
 \end{split}\] that satisfies the following properties:
 \begin{itemize}
 \item $\nabla_{fX+gY}=f\nabla _XZ+g\nabla_YZ$
 \item $\nabla_X(Y+Z)=\nabla_XY+\nabla_XZ$
 \item $\nabla_X(f Y)=f\nabla_X Y+X(f)Y,$
 \end{itemize} 
 with $f, g$ real-valued functions of class $C^{\infty}$ on $M$.

\begin{theorem}[Levi-Civita (\cite{do}, p.55)] Given a Riemannian manifold $M$, there exists a unique affine connection $\nabla$ on $M$ satisfying the conditions:
\begin{itemize}
\item $\nabla $ is symmetric
\item $\nabla$ is compatible \footnote{A connection $\nabla$ on a Riemannian manifold $M$ is compatible with the metric if and only if $X\langle Y,Z\rangle=\langle \nabla_XY,Z\rangle+ \langle Y,\nabla_XZ\rangle$ for $X, Y, Z\in \Chi (M)$} with the Riemannian  metric.
\end{itemize}
\end{theorem}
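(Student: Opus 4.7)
The plan is the classical Koszul approach: first I would \emph{derive} a formula for $\langle \nabla_X Y, Z\rangle$ that uses only the metric, the vector fields, and Lie brackets (which settles uniqueness), and then I would use that very formula to \emph{define} a connection and check that it satisfies the required axioms.

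For uniqueness, assume $\nabla$ is symmetric and compatible with the metric. Applying compatibility to the three cyclic triples gives
\begin{align*}
X\langle Y,Z\rangle &= \langle \nabla_X Y, Z\rangle + \langle Y, \nabla_X Z\rangle,\\
Y\langle Z,X\rangle &= \langle \nabla_Y Z, X\rangle + \langle Z, \nabla_Y X\rangle,\\
Z\langle X,Y\rangle &= \langle \nabla_Z X, Y\rangle + \langle X, \nabla_Z Y\rangle.
\end{align*}
Adding the first two, subtracting the third, and using symmetry in the form $\nabla_X Y - \nabla_Y X = [X,Y]$ to collapse the mixed terms, I arrive at the Koszul identity
\begin{align*}
2\langle \nabla_X Y, Z\rangle ={}& X\langle Y,Z\rangle + Y\langle Z,X\rangle - Z\langle X,Y\rangle\\
&+ \langle [X,Y], Z\rangle - \langle [X,Z], Y\rangle - \langle [Y,Z], X\rangle.
\end{align*}
Since $Z$ is arbitrary and $\langle\,\cdot\,,\,\cdot\,\rangle$ is non-degenerate, this pins down $\nabla_X Y$ uniquely.

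For existence, I would take this identity as a \emph{definition}: declare $\nabla_X Y$ to be the unique vector field such that $\langle \nabla_X Y, Z\rangle$ equals half of the right-hand side for every $Z\in\Chi(M)$; non-degeneracy of the metric guarantees that $\nabla_X Y$ is well-defined. What remains is routine verification of the three axioms of an affine connection ($C^\infty$-linearity in $X$, additivity in $Y$, and the Leibniz rule $\nabla_X(fY) = f\nabla_X Y + X(f)Y$), together with symmetry and compatibility. Symmetry is immediate from writing $\langle \nabla_X Y - \nabla_Y X, Z\rangle$ using the formula, and compatibility follows by summing $\langle\nabla_X Y, Z\rangle$ and $\langle Y, \nabla_X Z\rangle$ and observing that the bracket terms cancel against the $Y$- and $Z$-derivatives.

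The main obstacle, and the one step I would carry out carefully, is the Leibniz rule: the formula contains terms like $X\langle fY, Z\rangle$ and $\langle[X,fY],Z\rangle$ which each split according to the identity $[X,fY] = f[X,Y] + X(f)Y$, and these fragments must combine to produce exactly the extra summand $X(f)\langle Y, Z\rangle$ predicted by Leibniz. No new idea is required, only bookkeeping; once it is done, the theorem is established.
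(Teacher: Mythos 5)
Your argument is the standard Koszul-formula proof of the Levi-Civita theorem, which is precisely the one given in the reference the paper cites (do Carmo, p.~55); the paper itself states the theorem without proof, so there is nothing to compare beyond noting that your derivation of
$2\langle \nabla_X Y, Z\rangle = X\langle Y,Z\rangle + Y\langle Z,X\rangle - Z\langle X,Y\rangle + \langle [X,Y],Z\rangle - \langle [X,Z],Y\rangle - \langle [Y,Z],X\rangle$
and the subsequent verifications are correct. The only point worth making explicit in the existence step is that the right-hand side is $C^\infty(M)$-linear in $Z$, so that non-degeneracy of the metric really does produce a well-defined vector field $\nabla_X Y$; this is part of the bookkeeping you already flag.
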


In particular, the Levi-Civita connection can be written, in a coordinate system $(U,x)$, as 
$$\displaystyle\nabla_{\frac{\partial}{\partial x_i}}\frac{\partial}{\partial x_j}=\displaystyle\sum_{k=1}^{m}\Gamma_{ij}^k\frac{\partial}{\partial x_k}.$$ 
where the coefficients $\Gamma_{ij}^k$ are called the {\em Christoffell symbols} and can be computed with the following formula
$$\Gamma_{ij}^k=\frac{1}{2}\displaystyle\sum_{h=1}^{m}g^{hk}\left( \frac{\partial g_{jh}}{\partial x_i} + \frac{\partial g_{hi}}{\partial x_j}-\frac{\partial g_{ij}}{\partial x_h}\right),$$
where $g^{ij}$ is the inverse matrix of $g_{ij}=\langle X_i,X_j\rangle$.\\

The {\em curvature tensor} intuitively measures the deviation of a manifold from being locally Euclidean.

\begin{definition}
The curvature tensor $R$ of a Riemannian manifold $M$ is a correspondence that associates to every pair $X,Y\in \Chi(M)$ a mapping 
 \[\begin{split}
 R(X,Y): &\: \Chi(M)\rightarrow \Chi(M)\\
 &Z\mapsto R(X,Y)Z,
 \end{split}\]
 where  $$R(X,Y)Z=\nabla_Y\nabla_XZ-\nabla_X\nabla_YZ+\nabla_{[X,Y]}Z$$
\end{definition} 
If $M=\mathbb R^n$, then $R(X,Y)Z=0$ for all $X,Y,Z\in \Chi(\mathbb R^n)$.
It is convenient to express this curvature in a coordinate system $(U,x)$ based at the point $p\in M$.  We have $$R(X_i,X_j)X_k= \displaystyle \sum _l R^l_{ijk} X_l,$$
 where the coefficients $R_{ijk}^l$ can be expressed in terms of $\Gamma^k_{ij}$  
  $$R_{ijk}^s=\displaystyle\sum_{l=1}^{m}\Gamma^l_{ik}\Gamma_{jl}^s-\displaystyle\sum_{l=1}^{m}\Gamma^l_{jk}\Gamma_{il}^s+ \frac{\partial \Gamma_{ik}^s}{\partial x_j} - \frac{\partial \Gamma^s_{jk}}{\partial x_i}.$$
Moreover, we have
 $$\langle R(X_i,X_j)X_k,X_s\rangle=\displaystyle \sum _l R^l_{ijk}g_{ls}.$$
We introduce the {\em sectional curvature} of $M$, which generalizes the Gaussian
curvature of surfaces. 
Let $M$ be a Riemannian $n-$manifold and let $p\in M$. If $\Pi$ is any $2$-dimensional subspace of $T_pM$, and $U$ is a neighborhood of zero on which $\exp_p$ is a diffeomorphism, then $S_{\Pi}:=\exp_p(\Pi \cap U)$ is a $2$-dimensional submanifold of $M$ containing $p$. Then the sectional curvature of $M$ associated with $\Pi$ is the Gaussian Curvature of $S_{\Pi}$. If $\{X, Y\}$ is any basis for $\Pi$, we indicate the sectional curvature as $K(X,Y)$ and we have
\begin{definition}[]If $\{X, Y\}$ is any basis for a $2$-plane  $\Pi\in T_p M$, then
$$K(X,Y)=\frac{\langle R(X,Y)X,Y\rangle}{\vert X\vert ^2\vert Y\vert^2-\langle X, Y\rangle^2}.$$
\end{definition}

A well-known theorem by Hadamard establishes an important connection 
between local and global properties of a differential manifold.

\begin{theorem}[Hadamard (\cite{do}, p. 149]\label{had}
Let $M$ be a complete Riemannian manifold, simply connected
with sectional curvature $K(p,\sigma)\leq 0$, for all $p\in M$ and for
all $\sigma\in T_p(M)$. Then $M$ is diffeomorphic to $\R^n$,
$n=dim M$; more precisely, $exp_p: T_pM\to M$ is a diffeomorphism.
\end{theorem}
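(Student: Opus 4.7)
The plan is to follow the classical Hadamard--Cartan argument in three stages: (i) show $exp_p$ is globally defined on $T_pM$, (ii) show $exp_p$ has no critical points, so that it is a local diffeomorphism, and (iii) upgrade this to a global diffeomorphism by a covering-space argument exploiting the simple-connectedness of $M$.

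For (i) I would invoke the Hopf--Rinow theorem: completeness of $M$ as a metric space is equivalent to geodesic completeness, so the geodesic $\gamma_v(t) = exp_p(tv)$ extends to all of $\R$ for every $v \in T_pM$. For (ii), the core tool is Jacobi fields. Along a geodesic $\gamma$ starting at $p$, the differential $d(exp_p)_{tv}(tw)$ equals $J(t)$, the Jacobi field along $\gamma$ with $J(0) = 0$ and $J'(0) = w$. Setting $f(t) := \langle J(t), J(t)\rangle$, a standard computation using the Jacobi equation and the definition of sectional curvature yields $f''(t) = 2\|J'(t)\|^2 - 2 K(\gamma', J)\bigl(\|\gamma'\|^2\|J\|^2 - \langle \gamma', J\rangle^2\bigr) \ge 2\|J'(t)\|^2 \ge 0$, where the inequality uses $K \le 0$ and the non-negativity of the Gram determinant. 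Thus $f$ is convex with $f(0) = f'(0) = 0$ and $f \ge 0$. If $f(t_1) = 0$ for some $t_1 > 0$, convexity forces $f \equiv 0$ on $[0, t_1]$, hence $J \equiv 0$ there, contradicting $J'(0) = w \ne 0$ via the linearity of the Jacobi equation. Therefore $J(t) \ne 0$ for all $t > 0$, so $d(exp_p)$ has trivial kernel throughout $T_pM$, and $exp_p$ is a local diffeomorphism.

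For (iii), pull back the metric of $M$ to $T_pM$ via $exp_p$, giving a Riemannian metric $g^*$ on $T_pM$ for which $exp_p$ is a local isometry. Radial lines from $0$ map to geodesics of $M$, so they are complete geodesics in $(T_pM, g^*)$; by Hopf--Rinow, $(T_pM, g^*)$ is complete. A standard theorem asserts that a local isometry from a complete Riemannian manifold to a connected Riemannian manifold is a covering map, so $exp_p \colon T_pM \to M$ is a covering. Since $M$ is simply connected and $T_pM$ is connected, the covering has a single sheet, and $exp_p$ is the desired global diffeomorphism. The hardest step is (ii), where one must establish the key convexity estimate for $|J|^2$ by carefully unwinding the definitions of curvature and sectional curvature used in the paper; steps (i) and (iii) are then standard applications of Hopf--Rinow and covering-space theory.
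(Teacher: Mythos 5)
The paper does not prove this theorem at all --- it is quoted verbatim from the cited reference (do Carmo, Riemannian Geometry, Ch.~7), and your three-step argument (Hopf--Rinow for global definition of $exp_p$, the convexity of $\lvert J\rvert^2$ for Jacobi fields under $K\le 0$ to rule out conjugate points, and the pulled-back-metric covering-space argument) is exactly the standard Cartan--Hadamard proof given there. The argument is correct as written, including the key convexity step, so there is nothing to add.
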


\section{Main results}\label{main}

Theorem \ref{had} represents a key result for our construction, 
because we need to associate, through the exponential map, a vector belonging to the tangent space of $E(r)$ to a geodesic. Hence, it is crucial for our purposes that
the exponential map is a global diffeomorphism. 
Since $E(r)$ is diffeomorphic to an Euclidean space, and hence complete and simply connected, all we need to prove is that its sectional curvature is non-positive.
In the following theorem, we prove this property for the case $M=2$.

\begin{theorem}\label{teorcurv}
Let $M=2$. Then the equilibrium manifold $E(r)$
has non-positive sectional curvature.
\end{theorem}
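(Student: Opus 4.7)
The plan is to deduce non-positive sectional curvature from the Gauss equation applied to the isometric embedding $E(r)\hookrightarrow S\times\Omega(r)\cong\R^{2L-1}$, exploiting a rigid structure of the second fundamental form. Specialising the parametrisation~(\ref{parE}) to $M=2$ gives
\[
\Phi(t,\bar\omega_1)=\bigl(p(t),\,\bar\omega_1,\,w_1(t)-p(t)\bar\omega_1\bigr),\qquad t\in\R,\ \bar\omega_1\in\R^{L-1},
\]
with coordinate fields $X_t=(p'(t),0,q)$ and $X_l=(0,e_l,-p_l(t))$ for $l=1,\dots,L-1$, where $q(t,\bar\omega_1)=w_1'(t)-p'(t)\bar\omega_1$. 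The first step is to record two structural facts that follow from $\Phi$ being \emph{affine} in the fibre variable $\bar\omega_1$. Since each $X_l$ depends only on $t$, the ambient Euclidean derivative satisfies $\bar\nabla_{X_l}X_m=0$; hence the second fundamental form $II$ satisfies $II(X_l,X_m)=0$, so the fibres $\{t=\mathrm{const}\}$ are flat totally geodesic submanifolds of $E(r)$. Moreover $\bar\nabla_{X_l}X_t=\partial_{\bar\omega_{1,l}}X_t=(0,0,-p_l'(t))$ is a scalar multiple of the \emph{single} fixed ambient vector $e:=(0,0,1)$; writing $N:=e^{\perp}$ for its normal projection onto the normal bundle of $E(r)$, this yields
\[
II(X_t,X_l)=-p_l'(t)\,N \qquad (l=1,\dots,L-1),
\]
so every mixed component of $II$ is collinear with the single normal direction $N$.

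The second step is to feed these two facts into the Gauss equation for an ambient Euclidean target,
\[
K(X,Y)\bigl(|X|^2|Y|^2-\langle X,Y\rangle^2\bigr)=\langle II(X,X),II(Y,Y)\rangle-|II(X,Y)|^2,
\]
expanded in the frame $\{X_t,X_1,\dots,X_{L-1}\}$. For $X=aX_t+\sum_l b_l X_l$ and $Y=cX_t+\sum_l d_l X_l$, setting $T:=II(X_t,X_t)$, $\mu:=\sum_l p_l'(t)b_l$, and $\nu:=\sum_l p_l'(t)d_l$, bilinearity together with the two facts above gives
\[
II(X,X)=a^2T-2a\mu\,N,\quad II(Y,Y)=c^2T-2c\nu\,N,\quad II(X,Y)=ac\,T-(a\nu+c\mu)\,N.
\]
Decomposing $T=sN+T^{\perp}$ with $T^{\perp}\perp N$ in the normal bundle, the $|T^{\perp}|^2$ contributions cancel symmetrically between $\langle II(X,X),II(Y,Y)\rangle$ and $|II(X,Y)|^2$, and the remainder collapses algebraically via $4ac\mu\nu-(a\nu+c\mu)^2=-(a\nu-c\mu)^2$ to
\[
\langle II(X,X),II(Y,Y)\rangle-|II(X,Y)|^2=-|N|^2(a\nu-c\mu)^2\leq 0,
\]
which gives $K(X,Y)\leq 0$, as required.

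The conceptual heart of the argument — and the main obstacle — is the collinearity assertion $II(X_t,X_l)\in\R N$, which reduces the Gauss equation to a single manifestly non-positive perfect square. Verifying it is the only spot where explicit computation is needed, and rests on the fact that the only $\bar\omega_1$-dependence of $X_t$ enters through the linear form $p'(t)\bar\omega_1$. The same non-positivity can of course be obtained by grinding out the Christoffel symbols $\Gamma_{ij}^k$ and the components $R_{ijk}^s$ in the coordinates of (\ref{parE}) via the formulas of Section~\ref{tools}, which is presumably the route adopted in the Appendix; the viewpoint sketched above explains \emph{why} the coordinate bookkeeping is bound to produce a non-positive sign. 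As a bonus, the equality case $a\nu=c\mu$ exactly pinpoints the 2-planes on which $K$ vanishes, and is the geometric input needed for Corollary~\ref{coruni}.
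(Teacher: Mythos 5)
Your proof is correct, and it takes a genuinely different route from the paper's. The paper works entirely in coordinates: it computes the induced metric $g_{ij}$, its inverse, all the Christoffel symbols, and then the components $R^s_{0i0}$, arriving at $\langle R(X_0,X_i)X_0,X_i\rangle=-(p_i')^2B/\det g$ and $K(X_i,X_j)=0$ --- essentially the two appendices. You instead invoke the Gauss equation for the embedding into flat $\R^{2L-1}$ and isolate the two structural facts that force the sign: the fibre fields $X_l$ are constant along the fibres, so $II(X_l,X_m)=0$, and $\bar\nabla_{X_l}X_t=-p_l'(t)\,e$ with $e=(0,\dots,0,1)$ a \emph{fixed} ambient vector, so every mixed component of $II$ is a multiple of the single normal vector $N=e^{\perp}$. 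Your algebra checks out: the $a^2c^2|T|^2$ terms cancel, $4ac\mu\nu-(a\nu+c\mu)^2=-(a\nu-c\mu)^2$, and the numerator of $K$ collapses to $-|N|^2(a\nu-c\mu)^2\leq 0$. Your route buys three things. First, it treats an \emph{arbitrary} $2$-plane, whereas the paper only evaluates $K$ on the coordinate planes $\mathrm{span}(X_0,X_i)$ and $\mathrm{span}(X_i,X_j)$; since non-positivity on coordinate planes does not by itself give non-positivity on all planes, your version actually closes a small gap in the stated theorem. Second, the sign is explained rather than observed at the end of a long computation. Third, the equality analysis is transparent: one checks directly that $|N|^2=B/\det g$, so your formula specialises exactly to the paper's $-(p_i')^2B/\det g$ on the plane $\mathrm{span}(X_t,X_i)$, and $K\equiv 0$ forces either $B=0$ or $p_i'=0$ for all $i$ --- in either case $p'\equiv 0$, which is precisely the input Corollary \ref{coruni} needs. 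The only thing the coordinate computation buys in exchange is the explicit metric and Christoffel data, which are not reused elsewhere in the paper.
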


\begin{proof}
If $M=2$, the manifold $B(r)$ is diffeomorphic to $\R$ through the map (see \eqref{parB}
above)

\begin{equation*} 
\begin{split} 
 \phi:\: \mathbb{R} &  {\longrightarrow} B(r)\subset S\times \mathbb{R}^{M-1}=S\times \mathbb{R}\\
       t & \longmapsto  (p(t),w(t)),\\ 
\end{split} 
\end{equation*}

and $E(r)$ is a submanifold of dimension $L$ in a space of dimension $2L-1$.
By setting $\alpha_i:=\omega^i$, a parametrization of $E(r)$ is given by (see \eqref{parE}
above)
\begin{equation*}  
\begin{split} 
 \Phi:\: \mathbb{R}^L &  \overset{\:\:\:\:\:\:\:\:\:\:\:\:\:\:\:\:\:\:\:\:\:\:\:\:\:\:\:}{\longrightarrow} E(r),\\
        (t,\alpha_1,\dots,\alpha_{L-1} )& \longmapsto  (p_1(t),\dots,p_{L-1}(t),\alpha_1,\dots,\alpha_{L-1},w(t)-p_1(t)\alpha_1-\dots-p_{L-1}(t)\alpha_{L-1}).\\ 
\end{split} 
\end{equation*}

Consider a basis of a vector field of $T_xE(r)$ given by
$$\Phi_0=\left(\frac{\partial p_1}{\partial t},\dots,\frac{\partial p_{L-1}}{\partial t} ,0,\dots,0,\frac{\partial w}{\partial t}-\frac{\partial p_1}{\partial t}\alpha_1-\dots-\frac{\partial p_{L-1}}{\partial t}\alpha_{L-1}\right)=(p_1',\dots,p_{L-1}',0,\dots,0,A)$$
$$\Phi_{1}=\left(0,\dots, 0,1,0,0,-p_1(t)\right)$$
$$\Phi_{i}=\left(0,\dots, 0,0,1,0,-p_i(t)\right)$$
where $1$ is in the $L-1+i$ position and where
$A=\frac{\partial w}{\partial t}-\frac{\partial p_1}{\partial t} \alpha_1-\dots-\frac{\partial p_{L-1}}{\partial t}\alpha_{L-1}$
or, more compactly,  \footnote {$A=w'-\langle p',\alpha \rangle$ with $p'=(p_1',\dots,p_{L-1}')$ and $\alpha=(\alpha_1,\dots,\alpha_{L-1})$.}
$$A= w'-p'_1 x_1-\dots-p'_{L-1}x_{L-1}.$$
Clearly, we have $A(x_0,x_1,\dots,x_{L-1})$ and
$\frac{\partial }{\partial x_i}A=-p_i'$, for all $i\neq 0$.\\
We set $x_0=t,\:\: x_1=\alpha_1,\:\:x_2=\alpha_2,\:\:x_{L-1}=\alpha_{L-1}$ and $X_i=\Phi_i$.\\
The induced metric on $E(r)$ is given by
$$ds^2=\displaystyle\sum_{i,j=0}^{L-1}g_{ij}dx_idx_j.$$
Set 
$B(x_0)=\left(\frac{\partial p_1}{\partial x_0}\right)^2+\dots+\left(\frac{\partial p_{L-1}}{\partial x_0}\right)^2$ or, more compactly,
$$B(x_0)=(p_1')^2+\dots +(p_{L-1}')^2.$$
An easy calculation gives
\[\begin{split}
&\langle \Phi_0,\Phi_0\rangle=B+A^2\\
&\langle \Phi_{0},\Phi_{i}\rangle=-p_iA\\
&\langle \Phi_{i},\Phi_{i}\rangle=1+p_i^2\\
&\langle \Phi_{i},\Phi_{j}\rangle=p_ip_j\\
\end{split}
\]

\[ 
g_{ij}=\left( 
   \begin{array}{cccccc} 
     B+A^2 & \;\;-p_1 A& \dots &\:\:-p_iA&\dots & \:\:-p_{L-1}A \\ 
\rule[0.5cm]{0mm}{1cm}
    -p_1A& 1+p_1^2 &\dots& p_1p_i&\dots&p_1p_{L-1}\\
 \rule[0.5cm]{0mm}{1cm}    
    \dots& \dots&\dots& \dots&\dots&\dots\\   
\rule[0.5cm]{0mm}{1cm}    
    -p_iA& p_1p_i&\dots& 1+p_i^2&\dots&p_ip_{L-1}\\
    \rule[0.5cm]{0mm}{1cm}    
    \dots& \dots&\dots& \dots&\dots&\dots\\
\rule[0.5cm]{0mm}{1cm}    
    -p_{L-1}A& p_1p_{L-1}&\dots& p_ip_{L-1}&\dots&1+p_{L-1}^2
   \end{array} 
\right). 
\] 
Setting
$$g:=(1+p_1^2+\dots+p_{L-1}^2)B+A^2=\Vert p\Vert ^2B+A^2,$$
where $$ \fbox{$\Vert p\Vert ^2=(1+p_1^2+\dots+p_{L-1}^2)$}, $$
the inverse matrix $g^{ij}$ can be written as

{\tiny\[
\frac{1}{ g }\\
\left( 
   \begin{array}{cccccc} 
    1+p_1^2+\dots+p_{L-1}^2 & \;\;p_1A&\dots& \:\:p_iA&\dots& \:\:p_{L-1}A \\ 
\rule[0.5cm]{0mm}{1cm}
    p_1A& (1+p_2^2+\dots+p_{L-1}^2)B+A^2&\dots& -p_1p_iB&\dots&-p_1p_{L-1}B\\ 
 \rule[0.5cm]{0mm}{1cm}    
    \dots& \dots&\dots& \dots&\dots&\dots\\  
\rule[0.5cm]{0mm}{1cm}    
    p_iA& -p_1p_iB&\dots & (1+p_1^2+\dots+p_{i-1}^2+p_{i+1}^2+\dots+p_{L-1}^2)B+A^2&\dots&-p_ip_{L-1}B\\
    \rule[0.5cm]{0mm}{1cm}    
  \dots& \dots&\dots& \dots&\dots&\dots\\  
 \rule[0.5cm]{0mm}{1cm}     
   p_{L-1}A& -p_1p_{L-1}B&\dots&-p_{i}p_{L-1}B&\dots& (1+p_1^2+\dots+p_{L-2}^2)B+A^2
   \end{array} 
\right) 
\] 
}

\noindent To compute the Christofell symbols
$$\Gamma_{ij}^k=\frac{1}{2}\displaystyle\sum_{h=1}^{L-1}g^{hk}\left( \frac{\partial g_{jh}}{\partial x_i} + \frac{\partial g_{hi}}{\partial x_j}-\frac{\partial g_{ij}}{\partial x_h}\right),$$
observe that the entries $g_{ij}$ with $i$ and $j$ both different from $0$ do not depend on the $x_i$, so their derivatives with respect to $x_k$, with $k\neq 0$, are zero.\\
Moreover, $\frac{\partial}{\partial x_i}g_{0i}=-p_{i}(-p'_{i})=p_{i}p_{i}'$, while $\frac{\partial}{\partial x_0}g_{ii}=2p_{i}p'_{i}$.
Hence all Christofell symbols with subscript different from $0$ vanish.
The others symbols can be obtained through a long but straightforward calculation (see Appendix \ref{Chris}). Observe that $\frac{\partial B}{\partial x_0}=2(p_1'p_1''+p_2'p_2''+\dots+p_{L-1}'p_{L-1}'')$, so for convenience we set  $$C:=p_1'p_1''+p_2'p_2''+\dots+p_{L-1}'p_{L-1}''$$ and 
 $$A':= w''- p_1''x_1-\dots- p_{L-1}''x_{L-1} $$ 
The Christoffel symbols are
\[
\begin{split}
\Gamma_{00}^0=&\frac{[\Vert p\Vert^2C+AA']}{\Vert p \Vert ^2B+A^2},\\
\Gamma_{00}^k=&\frac{p_{k}\left[AC-A'B\right]}{\Vert p \Vert ^2B+A^2},\\
\Gamma_{0j}^0=&\frac{-p'_{j}A}{\Vert p \Vert ^2B+A^2},\\
\Gamma_{0j}^k=&\frac{p'_{j}p_{k}B}{\Vert p \Vert ^2B+A^2},\\
\Gamma^0_{ij}=&\Gamma^0_{ii}=0,\\
\Gamma^k_{ij}=&\Gamma^k_{ii}=0. \end{split}\]

To determine the sectional curvature 

$$K(X,Y)=\frac{\langle R(X,Y)X,Y\rangle}{\vert X\wedge Y\vert},$$
we need to calculate the coefficients of the tensor $R$ using 
\begin{equation}\label{RC}
R_{ijk}^s=\displaystyle\sum_{m=0}^{l-1}\Gamma^m_{ik}\Gamma_{jm}^s-\displaystyle\sum_{m=0}^{l-1}\Gamma^m_{jk}\Gamma_{im}^s+ \frac{\partial \Gamma_{ik}^s}{\partial x_j} - \frac{\partial \Gamma^s_{jk}}{\partial x_i}.\end{equation}

In particular, we have that $K(X_i,X_j)=0$ for all $i,j\neq 0$.

Since $$K(X_0,X_i)=\frac{\langle R(X_0,X_i)X_0,X_i\rangle}{\vert X_0\wedge X_i\vert}$$
we use (\ref{RC}) to compute $\langle R(X_0,X_i)X_0,X_i\rangle=\displaystyle \sum _s R^s_{0i0}g_{si}$.

After a long but straightforward calculation (see Appendix \ref{Rcoeff}),
we obtain \[\begin{split}
\langle R(X_0,X_i)X_0,X_i\rangle=&\displaystyle \sum _s R^{s}_{0i0}g_{si}=R^0_{0i0}g_{0i}+R^1_{0i1}g_{1i}+\dots+R^{l-1}_{0i0}g_{(l-1)i}=\\
=&-\frac{(p_{i}')^2B}{(\det g)}\leq 0.
\end{split} \]

\end{proof}

As a by-product, the following corollary complements \cite[Theorem 5.1] {lmcurvuni}, where they show that for $L=2$ the zero-curvature condition on the equilibrium 
manifold is equivalent to the global uniqueness of the equilibrium price.

\begin{corollary}\label{coruni}
Let $M= 2$. A necessary and sufficient condition for a unique equilibrium price is that the curvature of E(r) is zero.
\end{corollary}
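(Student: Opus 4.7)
The plan is to prove the two directions separately. The \emph{only if} direction (uniqueness $\Rightarrow$ zero curvature) is Balasko's classical Theorem 7.3.9 in \cite{balib}, valid for any $M$, so I would simply invoke it. The \emph{if} direction (zero curvature $\Rightarrow$ uniqueness) is the new content and falls out of the explicit curvature computation already carried out in the proof of Theorem \ref{teorcurv}.

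For the converse, I would start from the identity established there,
$$
\langle R(X_0,X_i)X_0,X_i\rangle \;=\; -\frac{(p_i'(t))^2\, B(t)}{\det g},\qquad i=1,\dots,L-1,
$$
so that the vanishing of the sectional curvatures $K(X_0,X_i)$ on all of $E(r)$ forces $(p_i'(t))^2 B(t)=0$ pointwise. Since $B=\sum_{j=1}^{L-1}(p_j')^2$ is a sum of squares, an elementary case analysis gives $p_j'(t)\equiv 0$ for every $j$ and every $t$: if $B(t_0)>0$ at some $t_0$ then $p_i'(t_0)=0$ for all $i$, contradicting $B(t_0)>0$; hence $B\equiv 0$, which in turn yields $p_j'\equiv 0$.

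Next, I would conclude that the diffeomorphism $\phi:\mathbb R\to B(r)$, $t\mapsto (p(t),w(t))$, has constant price coordinate $p(t)\equiv p^\ast\in S$, so that $t\mapsto w(t)$ is a smooth bijection from $\mathbb R$ onto its image, hence strictly monotonic. Then I would plug this into the parametrization \eqref{parE}: for $M=2$ an endowment profile is determined by $\omega_1=(\alpha_1,\dots,\alpha_{L-1},\beta)\in\mathbb R^L$, and an equilibrium over it corresponds to a $t$ satisfying
$$
w(t)-\sum_{i=1}^{L-1} p_i^\ast \alpha_i \;=\; \beta.
$$
Since $w$ is injective, this equation has a unique solution $t$, and the associated (necessarily unique) equilibrium price is $p^\ast$. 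Combined with the existence of at least one equilibrium for each endowment, this yields global uniqueness.

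I do not expect a serious obstacle: the Riemann tensor has already been computed in Theorem \ref{teorcurv}, and the only mildly delicate point is the case analysis to pass from the vanishing of the selected sectional curvatures $K(X_0,X_i)$ to $p'\equiv 0$, and then the observation that the diffeomorphism property of $\phi$ upgrades monotonicity of $w$ to the uniqueness statement for every endowment in $\Omega(r)$.
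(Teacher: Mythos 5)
Your proposal is correct and follows essentially the same route as the paper: Balasko's Theorem 7.3.9 for the necessity direction, and the explicit formula $\langle R(X_0,X_i)X_0,X_i\rangle=-(p_i')^2B/\det g$ from Theorem \ref{teorcurv} for sufficiency, which forces $p_i'\equiv 0$ and hence a constant (therefore unique) equilibrium price. Your case analysis on $B=\sum_j(p_j')^2$ and the monotonicity argument for $w$ simply spell out details the paper's two-line proof leaves implicit.
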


\begin{proof}
By \cite[Theorem 7.3.9]{balib},
if for every $\omega\in\Omega(r)$ there is an unique equilibrium,
then the price  $p$ associated to $\omega$ does not depend on $\omega$, that is, $E(r)$ is an hyperplane and hence its curvature is zero. Conversely, by Theorem \ref{teorcurv},
if its sectional curvature is zero, then $p'_i=0$, that is, the price is constant and hence unique.
\end{proof}

\appendix

\section{Christoffel Symbols}\label{Chris}

{\tiny

{\bf Type: $\Gamma_{ii}^k$ where $i,k\neq 0$.}
\[\begin{split}
\Gamma_{ii}^k=&\frac{1}{2}\displaystyle\sum_{h=0}^{L-1}g^{hk}\left( \frac{\partial g_{ih}}{\partial x_i} + \frac{\partial g_{hi}}{\partial x_i}-\frac{\partial g_{ii}}{\partial x_h}\right)=\\
=&\frac{1}{2}\left[g^{0k}\left( \frac{\partial g_{i0}}{\partial x_i} + \frac{\partial g_{0i}}{\partial x_i}-\frac{\partial g_{ii}}{\partial x_0}\right)+\dots+g^{(L-1)k}\left( \frac{\partial g_{i(L-1)}}{\partial x_i} + \frac{\partial g_{(L-1)i}}{\partial x_i}- \frac{\partial g_{ii}}{\partial x_{L-1}} \right)\right]\end{split}\]

\[\begin{split}
\Gamma_{ii}^k=&
\frac{1}{2}\left[g^{0k}\left( \frac{\partial g_{i0}}{\partial x_i} + \frac{\partial g_{0i}}{\partial x_i}-\frac{\partial g_{ii}}{\partial x_0}\right)+
g^{1k}\left( \cancel {\frac{\partial g_{i1}}{\partial x_i}} +\cancel{ \frac{\partial g_{1i}}{\partial x_i}}-\cancel{\frac{\partial g_{ii}}{\partial x_1}}\right)+g^{2k}\left(\cancel{ \frac{\partial g_{i2}}{\partial x_i} }+ \cancel{\frac{\partial g_{2i}}{\partial x_i}}-\cancel{ \frac{\partial g_{ii}}{\partial x_2}} \right)+\dots \right]=\\
=&\frac{1}{2}\left[g^{0k}\left(2 \frac{\partial g_{i0}}{\partial x_i} -\frac{\partial g_{ii}}{\partial x_0}\right)\right]=\frac{1}{2}\left[g^{0k}\left(2 p_{i}p_{i}'-2p_{i}p_{i}'\right)\right]=0
\end{split}
\]
{\bf Type: $\Gamma_{ij}^k$ where $i,j,k\neq 0$.}\\ 
Using $\frac{\partial}{\partial x_i}g_{0j}=p_{j}p'_{i}$ and  $\frac{\partial}{\partial x_0}g_{ij}=\frac{\partial}{\partial x_0}(p_{i}p_{j})=p'_{i}p_{j}+p_{i}p'_{j}$.

\[\begin{split}
\Gamma_{ij}^k=&\frac{1}{2}\displaystyle\sum_{h=0}^{L-1}g^{hk}\left( \frac{\partial g_{jh}}{\partial x_i} + \frac{\partial g_{hi}}{\partial x_j}-\frac{\partial g_{ij}}{\partial x_h}\right)=\\
=&\frac{1}{2}\left[g^{0k}\left( \frac{\partial g_{j0}}{\partial x_i} + \frac{\partial g_{0i}}{\partial x_j}-\frac{\partial g_{ij}}{\partial x_0}\right)+
g^{1k}\left( \frac{\partial g_{j1}}{\partial x_i} + \frac{\partial g_{1i}}{\partial x_j}-\frac{\partial g_{ij}}{\partial x_1}\right)+g^{2k}\left( \frac{\partial g_{j2}}{\partial x_i} + \frac{\partial g_{2i}}{\partial x_j}- \frac{\partial g_{ij}}{\partial x_2} \right)+\dots \right]\end{split}\]

\[\begin{split}
\Gamma_{ij}^k=&
\frac{1}{2}\left[g^{0k}\left( \frac{\partial g_{j0}}{\partial x_i} + \frac{\partial g_{0i}}{\partial x_j}-\frac{\partial g_{ij}}{\partial x_0}\right)+
g^{1k}\left( \cancel {\frac{\partial g_{j1}}{\partial x_i}} +\cancel{ \frac{\partial g_{1i}}{\partial x_j}}-\cancel{\frac{\partial g_{ij}}{\partial x_1}}\right)+g^{2k}\left(\cancel{ \frac{\partial g_{j2}}{\partial x_i} }+ \cancel{\frac{\partial g_{2i}}{\partial x_j}}-\cancel{ \frac{\partial g_{ij}}{\partial x_2}} \right)+\dots \right]=\\
=&\frac{1}{2}\left[g^{0k}\left(\frac{\partial g_{j0}}{\partial x_i} + \frac{\partial g_{0i}}{\partial x_j}-\frac{\partial g_{ij}}{\partial x_0}\right)\right]=\frac{1}{2}\left[g^{0k}\left(p_{j}p'_{i}+p_{i}p'_{j}-(p'_{i}p_{j}+p_{i}p'_{j})\right)\right]=0\\
\end{split}
\]
{\bf Type: $\Gamma_{0j}^0$ where $j\neq 0$.}\\ 
Using $\frac{\partial}{\partial x_j}g_{00}=-2Ap'_{j}$. \\
\[\begin{split}
\Gamma_{0j}^0=&\frac{1}{2}\displaystyle\sum_{h=0}^{L-1}g^{h0}\left( \frac{\partial g_{jh}}{\partial x_0} + \frac{\partial g_{h0}}{\partial x_j}-\frac{\partial g_{0j}}{\partial x_h}\right)=\\
=&\frac{1}{2}\left[g^{00}\left( \frac{\partial g_{j0}}{\partial x_0} + \frac{\partial g_{00}}{\partial x_j}-\frac{\partial g_{0j}}{\partial x_0}\right)+
g^{10}\left( \frac{\partial g_{j1}}{\partial x_0} + \frac{\partial g_{10}}{\partial x_j}-\frac{\partial g_{0j}}{\partial x_1}\right)+g^{20}\left( \frac{\partial g_{j2}}{\partial x_0} + \frac{\partial g_{20}}{\partial x_j}- \frac{\partial g_{0j}}{\partial x_2} \right)+\dots \right]
\end{split}\]

\[\begin{split}
\Gamma_{0j}^0=&
\frac{1}{2}\left[g^{00}\left( \cancel{\frac{\partial g_{j0}}{\partial x_0} }+ \frac{\partial g_{00}}{\partial x_j}-\cancel{\frac{\partial g_{0j}}{\partial x_0}}\right)+
g^{10}\left( \frac{\partial g_{j1}}{\partial x_0} + \frac{\partial g_{10}}{\partial x_j}-\frac{\partial g_{0j}}{\partial x_1}\right)+g^{20}\left( \frac{\partial g_{j2}}{\partial x_0} + \frac{\partial g_{20}}{\partial x_j}- \frac{\partial g_{0j}}{\partial x_2} \right)+\dots \right]=\\
=&\frac{1}{2}\left[g^{00}\left( \frac{\partial g_{00}}{\partial x_j}\right)+
g^{10}\left( \frac{\partial g_{j1}}{\partial x_0} + \frac{\partial g_{10}}{\partial x_j}-\frac{\partial g_{0j}}{\partial x_1}\right)+g^{20}\left( \frac{\partial g_{j2}}{\partial x_0} + \frac{\partial g_{20}}{\partial x_j}- \frac{\partial g_{0j}}{\partial x_2} \right)+\dots \right]=\\
=&\frac{1}{2}\left[g^{00}\left( -2Ap'_{j}\right)+
g^{10}\left( p_1'p_{j}+p_1p'_{j} +p_1p'_{j}-p_{j}p'_1\right)+g^{20}\left( p_2'p_{j}+p_2p'_{j} +p_2p'_{j}-p_{j}p'_2 \right)+\dots \right]=\\
=&\frac{1}{2}\left[g^{00}\left( -2Ap'_{j}\right)+
g^{10}\left( 2p_1p'_{j} \right)+g^{20}\left( 2p_2p'_{j}  \right)+\dots +g^{(L-1)0}\left( 2p_{n-1}p'_{j}  \right)\right]=\\
=&\frac{p'_{j}}{\det g }\left[(1+p_1^2+\dots+p_{L-1}^2)\left( -A\right)+
p_1A\left( p_1 \right)+p_2A\left( p_2  \right)+\dots +p_{L-1}A\left( p_{L-1} \right)\right]=\frac{-p'_{j}A}{\det g }\\\
\end{split}
\]
{\bf Type: $\Gamma_{0j}^k$ where $j,k\neq 0$.}\\ 
\[\begin{split}
\Gamma_{0j}^k&=\frac{1}{2}\displaystyle\sum_{h=0}^{L-1}g^{hk}\left( \frac{\partial g_{jh}}{\partial x_0} + \frac{\partial g_{h0}}{\partial x_j}-\frac{\partial g_{0j}}{\partial x_h}\right)=\\
&=\frac{1}{2}\left[g^{0k}\left( \frac{\partial g_{j0}}{\partial x_0} + \frac{\partial g_{00}}{\partial x_j}-\frac{\partial g_{0j}}{\partial x_0}\right)+
g^{1k}\left( \frac{\partial g_{j1}}{\partial x_0} + \frac{\partial g_{10}}{\partial x_j}-\frac{\partial g_{0j}}{\partial x_1}\right)+g^{2k}\left( \frac{\partial g_{j2}}{\partial x_0} + \frac{\partial g_{20}}{\partial x_j}- \frac{\partial g_{0j}}{\partial x_2} \right)+\dots \right]\end{split}\]

\[\begin{split}
\Gamma_{0j}^k&=
\frac{1}{2}\left[g^{0k}\left( \cancel{\frac{\partial g_{j0}}{\partial x_0} }+ \frac{\partial g_{00}}{\partial x_j}-\cancel{\frac{\partial g_{0j}}{\partial x_0}}\right)+
g^{1k}\left( \frac{\partial g_{j1}}{\partial x_0} + \frac{\partial g_{10}}{\partial x_j}-\frac{\partial g_{0j}}{\partial x_1}\right)+g^{2k}\left( \frac{\partial g_{j2}}{\partial x_0} + \frac{\partial g_{20}}{\partial x_j}- \frac{\partial g_{0j}}{\partial x_2} \right)+\dots \right]=\\
&=\frac{1}{2}\left[g^{0k}\left( \frac{\partial g_{00}}{\partial x_j}\right)+
g^{1k}\left( \frac{\partial g_{j1}}{\partial x_0} + \frac{\partial g_{10}}{\partial x_j}-\frac{\partial g_{0j}}{\partial x_1}\right)+g^{2k}\left( \frac{\partial g_{j2}}{\partial x_0} + \frac{\partial g_{20}}{\partial x_j}- \frac{\partial g_{0j}}{\partial x_2} \right)+\dots \right]=\\
&=\frac{1}{2}\left[g^{0k}\left( -2Ap'_{j}\right)+
g^{1k}\left( p_1'p_{j}+p_1p'_{j} +p_1p'_{j}-p_{j}p'_1\right)+g^{2k}\left( p_2'p_{j}+p_2p'_{j} +p_2p'_{j}-p_{j}p'_2 \right)+\dots \right]=\\
&=\frac{1}{2}\left[g^{0k}\left( -2Ap'_{j}\right)+
g^{1k}\left( 2p_1p'_{j} \right)+g^{2k}\left( 2p_2p'_{j}  \right)+\dots +g^{(L-1)k}\left( 2p_{L-1}p'_{j}  \right)\right]=\\
&=\frac{p'_{j}}{\det g }\left[(p_{k}A\left( -A\right)-p_1p_{k}B
\left( p_1 \right)-p_2p_{k}B
\left( p_2 \right)+\dots +\left[(1+p_1^2+\dots+p_{k-1}^2+p_{k+1}^2+\dots+p_{L-1}^2)B+A^2\right](p_{k})-\right.  \\
&-\left. \dots-p_{k}p_{L-1}B\left( p_{L-1} \right)\right]\\
&=\frac{p'_{j}p_{k}}{\det g }\left[-A^2-p_1^2B
-p_2^2B+\dots +\left[(1+p_1^2+\dots+p_{k-1}^2+p_{k+1}^2+\dots+p_{L-1}^2)B+A^2+\dots-p_{L-1}^2B\right)\right]\\
&=\frac{p'_{j}p_{k}}{\det g }B\\
\end{split}
\]
{\bf Type: $\Gamma_{00}^0$.}\\  
Using $\frac{\partial}{\partial x_0}g_{0i}=-p_{i}'A-p_{i}A'$ and $\frac{\partial}{\partial x_0}g_{00}=-2C+2AA'$  
\[\begin{split}
\Gamma_{00}^0&=\frac{1}{2}\displaystyle\sum_{h=0}^{L-1}g^{h0}\left( \frac{\partial g_{0h}}{\partial x_0} + \frac{\partial g_{h0}}{\partial x_0}-\frac{\partial g_{00}}{\partial x_h}\right)=\\
&=\frac{1}{2}\left[g^{00}\left( \frac{\partial g_{00}}{\partial x_0} + \frac{\partial g_{00}}{\partial x_0}-\frac{\partial g_{00}}{\partial x_0}\right)+
g^{10}\left( \frac{\partial g_{01}}{\partial x_0} + \frac{\partial g_{10}}{\partial x_0}-\frac{\partial g_{00}}{\partial x_1}\right)+g^{20}\left( \frac{\partial g_{02}}{\partial x_0} + \frac{\partial g_{20}}{\partial x_0}- \frac{\partial g_{00}}{\partial x_2} \right)+\dots \right]\end{split}\]

\[\begin{split}
\Gamma_{00}^0=&
\frac{1}{2}\left[g^{00}\left( \cancel{\frac{\partial g_{00}}{\partial x_0} }+ \frac{\partial g_{00}}{\partial x_0}-\cancel{\frac{\partial g_{00}}{\partial x_0}}\right)+
g^{10}\left( \frac{\partial g_{01}}{\partial x_0} + \frac{\partial g_{10}}{\partial x_0}-\frac{\partial g_{00}}{\partial x_1}\right)+g^{20}\left( \frac{\partial g_{02}}{\partial x_0} + \frac{\partial g_{20}}{\partial x_0}- \frac{\partial g_{00}}{\partial x_2} \right)+\dots \right]=\\
=&\frac{1}{2}\left[g^{00}\left( \frac{\partial g_{00}}{\partial x_0}\right)+
g^{10}\left( 2\frac{\partial g_{01}}{\partial x_0} -\frac{\partial g_{00}}{\partial x_1}\right)+g^{20}\left( 2\frac{\partial g_{02}}{\partial x_0} - \frac{\partial g_{00}}{\partial x_2} \right)+\dots \right]=\\
=&\frac{1}{2}\left[g^{00}\left( 2C+2AA'\right)+
g^{10}\left( -2p'_1A-2p_1A'+2Ap_1'\right)+g^{20}\left( -2p_2'A-2p_2A'+2Ap_2'\right)+\dots \right]=\\
=&\frac{1}{2}\left[g^{00}\left( 2C+2AA'\right)+
g^{10}\left(\cancel{ -2p'_1A}-2p_1A'+\cancel{2Ap_1'}\right)+g^{20}\left( \cancel{-2p_2'A}-2p_2A'+\cancel{2Ap_2'}\right)+\dots \right]=\\
=&\frac{1}{\det g }\left[(1+p_1^2+\dots+p_{L-1}^2)\left( C+AA'\right)+
p_1A\left( -A'p_1 \right)+p_2A\left(-p_2A'  \right)+\dots +p_{L-1}A\left( -p_{L-1}A' \right)\right]=\\
=&\frac{1}{\det g }\left[(1+p_1^2+\dots+p_{L-1}^2)C+AA'\right]=\frac{1}{\det g }\left[\Vert p\Vert ^2C+AA'\right]\\
\end{split}
\]
{\bf Type: $\Gamma_{00}^k$ where $k\neq 0$.}\\  
\[\begin{split}
\Gamma_{00}^k&=\frac{1}{2}\displaystyle\sum_{h=0}^{L-1}g^{hk}\left( \frac{\partial g_{0h}}{\partial x_0} + \frac{\partial g_{h0}}{\partial x_0}-\frac{\partial g_{00}}{\partial x_h}\right)=\\
&=\frac{1}{2}\left[g^{0k}\left( \frac{\partial g_{00}}{\partial x_0} + \frac{\partial g_{00}}{\partial x_0}-\frac{\partial g_{00}}{\partial x_0}\right)+
g^{1k}\left( \frac{\partial g_{01}}{\partial x_0} + \frac{\partial g_{10}}{\partial x_0}-\frac{\partial g_{00}}{\partial x_1}\right)+g^{2k}\left( \frac{\partial g_{02}}{\partial x_0} + \frac{\partial g_{20}}{\partial x_0}- \frac{\partial g_{00}}{\partial x_2} \right)+\dots \right]
\end{split}\]

\[\begin{split}
\Gamma_{00}^k=&
\frac{1}{2}\left[g^{0k}\left( \cancel{\frac{\partial g_{00}}{\partial x_0} }+ \frac{\partial g_{00}}{\partial x_0}-\cancel{\frac{\partial g_{00}}{\partial x_0}}\right)+
g^{1k}\left( \frac{\partial g_{01}}{\partial x_0} + \frac{\partial g_{10}}{\partial x_0}-\frac{\partial g_{00}}{\partial x_1}\right)+g^{2k}\left( \frac{\partial g_{02}}{\partial x_0} + \frac{\partial g_{20}}{\partial x_0}- \frac{\partial g_{00}}{\partial x_2} \right)+\dots \right]=\\
=&\frac{1}{2}\left[g^{0k}\left( \frac{\partial g_{00}}{\partial x_0}\right)+
g^{1k}\left( 2\frac{\partial g_{01}}{\partial x_0} -\frac{\partial g_{00}}{\partial x_1}\right)+g^{2k}\left( 2\frac{\partial g_{02}}{\partial x_0} - \frac{\partial g_{00}}{\partial x_2} \right)+\dots \right]=\\
=&\frac{1}{2}\left[g^{0k}\left( 2C+2AA'\right)+
g^{1k}\left( -2p'_1A-2p_1A'+2Ap_1'\right)+g^{2k}\left( -2p_2'A-2p_2A'+2Ap_2'\right)+\dots \right]=\\
=&\frac{1}{2}\left[g^{0k}\left( 2C+2AA'\right)+
g^{1k}\left(\cancel{ -2p'_1A}-2p_1A'+\cancel{2Ap_1'}\right)+g^{2k}\left( \cancel{-2p_2'A}-2p_2A'+\cancel{2Ap_2'}\right)+\dots \right]=\\
=&\frac{1}{\det g }\left[(p_{k}A\left( C+AA'\right)-p_1p_{k}B
\left( -p_1A' \right)-p_2p_{k}B\left( -p_2A'  \right)+\dots\right. \\
+&\left. \left[(1+p_1^2+\dots+p_{k-1}^2+p_{k+1}^2+\dots+p_{L-1}^2)B+A^2\right](-p_{k}A')+\dots-p_{k}p_{L-1}B\left( -p_{L-1}A' \right)\right]\\
=&\frac{p_{k}}{\det g }\left[AC-BA'\right]\\
\end{split}
\]	
}
\section{Coefficients of the curvature tensor $R$}\label{Rcoeff}
{\tiny
We have $$\langle R(X_0,X_i)X_0,X_i\rangle=\displaystyle \sum _s R^{s}_{0i0}g_{si}=R^0_{0i0}g_{0i}+R^1_{0i0}g_{1i}+\dots+R^{L-1}_{0i0}g_{(L-1)i},$$
where $$R_{0i0}^s=\displaystyle\sum_{m=0}^{L-1}\Gamma^m_{00}\Gamma_{im}^s-\displaystyle\sum_{m=0}^{L-1}\Gamma^m_{i0}\Gamma_{0m}^s+ \frac{\partial \Gamma_{00}^s}{\partial x_i} - \frac{\partial \Gamma^s_{i0}}{\partial x_0}.$$

Recall that  $\det g=(1+p_1^2+\dots+p_{L-1}^2)B+A^2= \Vert p\Vert^2B+A^2$ and
\[\begin{split}
\frac{\partial \det g}{\partial x_0}=&2\left[(p_1p_1'+p_2p_2'+\dots+p_{L-1}p_{L-1}')B+\Vert p\Vert ^2C+AA'\right]\\
\frac{\partial \det g}{\partial x_i}=&-2p_{i}'A\\
\end{split}
\]
The first addend is
$$R_{0i0}^0=\displaystyle\sum_{m=0}^{L-1}\Gamma^m_{00}\Gamma_{im}^0-\displaystyle\sum_{m=0}^{L-1}\Gamma^m_{i0}\Gamma_{0m}^0+ \frac{\partial \Gamma_{00}^0}{\partial x_i} - \frac{\partial \Gamma^0_{i0}}{\partial x_0},$$
that we can expand as
$$R_{0i0}^0=\cancel{\Gamma^0_{00}\Gamma_{i0}^0}+\cancel{\Gamma^1_{00}\Gamma_{i1}^0}+\cancel{\Gamma^2_{00}\Gamma_{i2}^0}+\dots-\cancel{\Gamma^0_{i0}\Gamma_{00}^0}-\Gamma^1_{i0}\Gamma_{01}^0-\Gamma^2_{i0}\Gamma_{02}^0+\dots+ \frac{\partial \Gamma_{00}^0}{\partial x_i} - \frac{\partial \Gamma^0_{i0}}{\partial x_0}.$$
We determine the derivative of Christoffel' symbols:
\[\begin{split}
\frac{\partial \Gamma^0_{00}}{\partial x_i}=&\frac{\partial }{\partial x_i}\frac{\left[(\Vert p\Vert^2)C+AA'\right]}{(\Vert p\Vert^2)B+A^2}=\\
=&\frac{(-p_{i}'A'-p_{i}''A)\left[(\Vert p\Vert^2)B+A^2\right]+2p_{i}'A\left[(\Vert p\Vert^2)C+AA'\right]}{(\det g)^2}\\
\end{split}
\] 
\[\begin{split}
\frac{\partial \Gamma^0_{i0}}{\partial x_0}=&\frac{\partial }{\partial x_0}\frac{-p_{i}'A}{(\Vert p\Vert^2)B+A^2}=\\
=&\frac{(-p_{i}''A-p_{i}'A')[(\Vert p\Vert^2)B+A^2]+2p_{i}'A\left[(p_1p_1'+p_2p_2'+\dots+p_{L-1}p'_{L-1})B+(\Vert p\Vert^2)C+AA'\right]}{(\det g)^2}\end{split}\]
By subtracting and simplifying,
$$\frac{\partial \Gamma_{00}^0}{\partial x_i} - \frac{\partial \Gamma^0_{i0}}{\partial x_0}=\frac{-2p_{i}'AB(p_1p_1'+\dots+p_{L-1}p'_{L-1})}{(\det g)^2}.$$
hence
\[
\begin{split}
R_{0i0}^0=&-\Gamma^1_{i0}\Gamma_{01}^0-\Gamma^2_{i0}\Gamma_{02}^0+\dots+ \frac{\partial \Gamma_{00}^0}{\partial x_i} - \frac{\partial \Gamma^0_{i0}}{\partial x_0}=\\
=&-\frac{p'_{i}p_1B}{\det g}\cdot\frac{-p_1'A}{\det g}-\frac{p'_{i}p_2B}{\det g}\cdot\frac{-p_2'A}{\det g}+\dots+\frac{-2p_{i}'AB(p_1p_1'+\dots+p_{L-1}p'_{L-1})}{(\det g)^2}=\\
=&\frac{p_{i}'AB(p_1p_1'+\dots+p_{L-1}p'_{L-1})}{(\det g)^2}-\frac{2p_{i}'AB(p_1p_1'+\dots+p_{L-1}p'_{L-1})}{(\det g)^2}=\\
=&-\frac{p_{i}'AB(p_1p_1'+\dots+p_{L-1}p'_{L-1})}{(\det g)^2}.\\
\end{split}\]
We calculate the coefficients for $k,i\neq 0$
$$R_{0i0}^k=\Gamma^0_{00}\Gamma_{i0}^k+\Gamma^1_{00}\cancel{\Gamma_{i1}^k}+\Gamma^2_{00}\cancel{\Gamma_{i2}^k}+\dots-\Gamma^0_{i0}\Gamma_{00}^k-\Gamma^1_{i0}\Gamma_{01}^k-\Gamma^2_{i0}\Gamma_{02}^k\dots+ \frac{\partial \Gamma_{00}^k}{\partial x_i} - \frac{\partial \Gamma^k_{i0}}{\partial x_0}.$$

\[\begin{split}
\frac{\partial \Gamma^k_{00}}{\partial x_i}=&\frac{\partial }{\partial x_i}\frac{p_{k}\left[AC-BA'\right]}{(\Vert p\Vert^2)B+A^2}=\\
=&\frac{p_{k}(-p_{i}'C+p''_{i}B)\left[(\Vert p\Vert^2)B+A^2\right]+2p_{k}p_{i}'\left[(AC-BA'\right]A}{(\det g)^2}=\\
=&\frac{-p_kp_{i}'C(\Vert p\Vert^2)B+p_kp''_{i}B(\Vert p\Vert^2)B-p_kp_{i}'CA^2+p_kp''_{i}BA^2+2p_{k}p_{i}'A^2C-2p_{k}p_{i}'BAA'}{(\det g)^2}\\
\end{split}
\]

\[\begin{split}
\frac{\partial \Gamma^k_{i0}}{\partial x_0}=&\frac{\partial }{\partial x_0}\frac{p_{i}'p_{k}B}{(\Vert p\Vert^2)B+A^2}=\\
=&\frac{(p_{k}p_{i}''B+p'_{i}p'_{k}B+2p_{i}'p_{k}C)[(\Vert p\Vert^2)B+A^2]-2p_{i}'p_{k}B\left[(p_1p_1'+p_2p_2'+\dots+p_{L-1}p'_{L-1})B+(\Vert p\Vert^2)C+AA'\right]}{(\det g)^2}\end{split}\]

\[\begin{split}
-\frac{\partial \Gamma^k_{i0}}{\partial x_0}=\\
=&\frac{-p_{k}p_{i}''B^2(\Vert p\Vert^2)-p'_{i}p'_{k}(\Vert p\Vert^2)B^2-2p_{i}'p_{k}C(\Vert p\Vert^2)B-p_{k}p_{i}''BA^2-p'_{i}p'_{k}BA^2-2p_{i}'p_{k}CA^2 }{(\det g)^2}\\
&+\frac{+2p_{i}'p_{k}B^2(p_1p_1'+p_2p_2'+\dots+p_{L-1}p'_{L-1})+2p_{i}'p_{k}B(\Vert p\Vert^2)C+2p_{i}'p_{k}BAA'}{(\det g)^2}.\end{split}\]

Hence the difference is
\[
\frac{\partial \Gamma_{00}^k}{\partial x_i} -\frac{\partial \Gamma^k_{i0}}{\partial x_0}=\frac{p_{i}'p_{k}\left[-BC\Vert p\Vert^2-CA^2+2B^2(p_1p_1'+p_2p_2'+\dots+p_{L-1}p'_{L-1})\right]-p'_{i}p'_{k}B[(\Vert p\Vert^2)B+A^2]}{(\det g)^2}
\]
and finally we obtain
\[\begin{split}
R_{0i0}^k=&\Gamma^0_{00}\Gamma_{i0}^k-\Gamma^0_{i0}\Gamma_{00}^k-\Gamma^1_{i0}\Gamma_{01}^k-\Gamma^2_{i0}\Gamma_{02}^k+\dots+ \frac{\partial \Gamma_{00}^k}{\partial x_i} - \frac{\partial \Gamma^k_{i0}}{\partial x_0}\\
=&\frac{[\Vert p\Vert^2C+AA']}{\det g}\frac{p'_{i}p_{k}B}{\det g}+\frac{p'_{i}A}{\det g}\frac{p_{k}\left[AC-A'B\right]}{\det g}-\frac{p'_{i}p_1B}{\det g}\frac{p'_{1}p_{k}B}{\det g}-\frac{p'_{i}p_{2}B}{\det g}\frac{p'_{2}p_{k}B}{\det g}+\dots+ \frac{\partial \Gamma_{00}^k}{\partial x_i} - \frac{\partial \Gamma^k_{i0}}{\partial x_0}=\\
=&\frac{p'_{i}p_{k}B[\Vert p\Vert^2C+AA']}{(\det g)^2}+\frac{p'_{i}p_{k}A\left[AC-A'B\right]}{(\det g)^2}-\frac{p'_{i}p_{k}B^2(p_1p_1'+\dots+p_{L-1}p'_{L-1})}{(\det g)^2}+ \frac{\partial \Gamma_{00}^k}{\partial x_i} - \frac{\partial \Gamma^k_{i0}}{\partial x_0}=\\
=&\frac{p_{i}'p_{k}B^2(p_1p_1'+p_2p_2'+\dots+p_{L-1}p'_{L-1})-p'_{i}p'_{k}B[(\Vert p\Vert^2)B+A^2]}{(\det g)^2}
\end{split}\]

Combining all the terms, we get
 \[\begin{split}
&\langle R(X_0,X_i)X_0,X_i\rangle=\displaystyle \sum _s R^{s}_{0i0}g_{si}=R^0_{0i0}g_{0i}+R^1_{0i0}g_{1i}+\dots+R^{l-1}_{0i0}g_{(L-1)i}=\\
=&\frac{p'_{i}AB(p_1p_1'+\dots+p'_{L-1}p_{L-1})}{(\det g)^2}\cdot p_{i}A+\frac{p_{i}'p_{1}B^2(p_1p_1'+\dots+p_{L-1}p'_{L-1})-p'_{i}p'_{1}B[(\Vert p\Vert^2)B+A^2]}{(\det g)^2}\cdot p_1p_{i}+\dots+\\
+&\frac{p_{i}'p_{i}B^2(p_1p_1'+\dots+p_{L-1}p'_{L-1})-p'_{i}p'_{i}B[(\Vert p\Vert^2)B+A^2]}{(\det g)^2}\cdot (1+p_{i}^2)+\dots + \\
+&\frac{p_{i}'p_{L-1}B^2(p_1p_1'+\dots+p_{L-1}p'_{L-1})-p'_{i}p'_{L-1}B[(\Vert p\Vert^2)B+A^2]}{(\det g)^2}\cdot p_{L-1}p_{i}=\\
=&\frac{p_{i}'p_{i}B(p_1p_1'+\dots+p_{L-1}p'_{L-1})\left[A^2+(\Vert p\Vert )^2B\right]-p'_{i}p'_{1}B[(\Vert p\Vert^2)B+A^2](p_1p_1'+\dots+p_{L-1}p'_{L-1})-(p_{i}')^2B[(\Vert p\Vert^2)B+A^2]}{(\det g)^2}=\\
=&\frac{-(p_{i}')^2B[(\Vert p\Vert^2)B+A^2]}{(\det g)^2}=\\
=&\frac{-(p_{i}')^2B}{(\det g)}.
\end{split} \]
}


\begin{thebibliography}{99}

\bibitem[Allen et al(2002)]{pol} Allen, B., Dutta, J., and Polemarchakis, H. M., 2002,
\lq\lq Equilibrium selections'', in G. Bitros (ed.), Essays in Honor of E. Drandakis, Edward Elgar.
\bibitem[Balasko(1988)]{balib} Balasko, Y., 1988, Foundations of the Theory of General Equilibrium, Academic Press, Boston.
\bibitem[Carmo(1992)]{do} Carmo, M. do, 1992, Riemannian Geometry, Birkh\"auser, Boston.


\bibitem[DeMichelis and Germano(2000)]{dmg}DeMichelis, S., Germano, F., 2000. Some consequences of the unknottedness of
the Walras correspondence. J. Mathematical Economics. 47, 537–545.

\bibitem[Loi and Matta(2011)]{lmcatmin} Loi, A., S. Matta, 2011, Catastrophes minimization on the equilibrium manifold, Journal of Mathematical Economics  47 , 617-620.
\bibitem[Loi and Matta(2018)]{lmcurvuni} Loi, A., S. Matta, 2018, Curvature and uniqueness of equilibrium, Journal of Mathematical Economics  74 , 62-67.
\bibitem[Loi and Matta(2021)]{lment} Loi, A., S. Matta, 2021, Minimal entropy and uniqueness of price equilibria in a pure exchange economy, Journal of Mathematical Economics  97, 102555.


\end{thebibliography}
\end{document}